\def\dim{\mathsf{dim}}
\def\<{\langle}
\def\>{\rangle}
\def\set#1{{\sf #1}}
\def\map#1{{\mathcal{#1}}}
\def\hil#1{{\mathscr{#1}}}
\def\Supp{\set{Supp}}
\def\Reals{\mathbb R}
\def\Naturals{\mathbb N}
\def\Tr{\operatorname{Tr}}
\DeclareMathOperator*{\SumInt}{%
\mathchoice%
  {\ooalign{$\displaystyle\sum$\cr\hidewidth$\displaystyle\int$\hidewidth\cr}}
  {\ooalign{\raisebox{.14\height}{\scalebox{.7}{$\textstyle\sum$}}\cr\hidewidth$\textstyle\int$\hidewidth\cr}}
  {\ooalign{\raisebox{.2\height}{\scalebox{.6}{$\scriptstyle\sum$}}\cr$\scriptstyle\int$\cr}}
  {\ooalign{\raisebox{.2\height}{\scalebox{.6}{$\scriptstyle\sum$}}\cr$\scriptstyle\int$\cr}}
}
\newtheorem{lemma}{Lemma}[section]
\newtheorem{theorem}{Theorem}
\begin{document}
\title{Jaynes' principle for quantum Markov processes: Generalized Gibbs - von Neumann states rule}
\author{J Novotn\'y, J Mary\v ska and I Jex}

\address{Department of Physics, Faculty of Nuclear Sciences and Physical Engineering,
Czech Technical University in Prague, B\v rehov\'a 7,
115 19 Praha 1 - Star\'e M\v{e}sto, Czech Republic \\ corresponding author's email:jaroslav.novotny@fjfi.cvut.cz
}

\date{\today}




\date{\today}

\begin{abstract}
We prove that any asymptotics of a finite-dimensional quantum Markov processes can be formulated in the form of a generalized Jaynes' principle in the discrete as well as in the continuous case. Surprisingly, we find that the open system dynamics does not require maximization of von Neumann entropy. In fact, the natural functional to be extremized is the quantum relative entropy and the resulting asymptotic states or trajectories are always of the exponential Gibbs-like form. Three versions of the principle are presented for different settings, each treating different prior knowledge: for asymptotic trajectories of fully known initial states, for asymptotic trajectories incompletely determined by known expectation values of some constants of motion and for stationary states incompletely determined by expectation values of some integrals of motion. All versions are based on the knowledge of the underlying dynamics. Hence our principle is primarily rooted in the inherent physics and it is not solely an information construct.
The found principle coincides with the MaxEnt principle in the special case of unital quantum Markov processes. We discuss how the generalized principle modifies fundamental relations of statistical physics.
\end{abstract}

\maketitle
\section{Introduction}
\label{sec:introduction}
Extremal and variational principles \cite{Lanczos,Basdevant,Kirchhoff,Gibbs,Gerjvoy,Rojo, MaxEnt,Jaynes1,Jaynes2,Jaynes3,Kapur1,Kapur2,Banavar2010} play a central role in
the formulation of the laws of Nature and pervade all branches of science ranging from evolutionary biology to analytic mechanics or quantum field theory. They represent a common conceptual frame, in many cases they have a very broad range of applicability and simplify solution of a given task. They improve our understanding of the studied field and consequently allow for possible, sometimes quite far reaching, extensions. 

A large class of such principles \cite{MaxEnt,Jaynes1,Jaynes2,Jaynes3,Kapur1,Kapur2,Banavar2010} have been found extremely useful if the complexity of the investigated problem prohibits to reveal its solution by standard techniques, e.g. by directly solving equations of motion or the solution contains too much information about the system to be used for its effective control. Such situation naturally arises in statistical physics, where the number of mutually interacting subsystems is gargantual. It does not improve in the quantum domain or for open system dynamics. An open system quantum dynamics eventually drives macroscopic system into equilibrium or at least a stationary regime. Due to high complexity of the whole process, also frequently accompanied with a lack of details of the dynamics and of the initial state, the analysis of equilibrium mechanisms and the emerging states constitutes a formidable task. In 1957 Jaynes \cite{Jaynes1,Jaynes2,Jaynes3}, to circumvent the whole issue, introduced the MaxEnt principle turning the problem into finding a constrained extreme of entropy. It affirms that among all states consistent with observed data the one with the maximal entropy best represents the actual knowledge of the system state. Thus, provided that mean values $\<A_i\>_{\rho} = \Tr\left[\rho A_i\right]$ of observables $A_i$ (typically conserved quantities) constitute the only available knowledge about the established equilibrium, its corresponding state, maximizing the von Neumann entropy $S(\rho)= - \Tr\left[\rho \ln \rho\right]$ (with Boltzmann constant $k_B=1$), reads \cite{Balian}

\begin{equation}
\label{eq_equilibrium_state}
\rho_{eq}=\frac{1}{Z} \exp\left(-\sum_i \gamma_i A_i\right).
\end{equation}
$Z= \Tr\{\exp(-\sum_i \gamma_i A_i)\}$, the normalizing partition function as a function of independent Lagrange multipliers $\gamma_i$ comprises all relevant information necessary to construct statistical mechanics characteristics of the equilibrium, i.e. mean values
$\<A_i\>_{\rho_{eq}} = - \left(\partial \ln Z /\partial \gamma_i\right)_{\gamma_{j \neq i}}$, entropy $S  = \ln Z + \sum_i \gamma_i \<A_i\>_{\rho_{eq}}$ of the equilibrium state (\ref{eq_equilibrium_state}) and the change of entropy $dS = \sum_i \gamma_i d\<A_i\>$ for a shift of the equilibrium.

The principle was applied to many problems \cite{Banavar2010,Karmeshu2012}, but also attracts significant criticism and raises relevant questions at the same time. To which dynamical systems does the Jaynes' principle actually apply and what is its explicit form? Which available information, represented by a set of observables $A_i$, must be taken into considerations? Does the von Neumann entropy play a unique role or may or should we extremize some other (entropy like) quantity? For example, the Tsallis entropy is used to build up non-extensive statistical thermodynamics \cite{Tsallis2009}. A minimum relative entropy principle is frequently used in adaptive strategies \cite{Ortega2010}. However, this requires additional {\it prior} information (reference probability distribution), which must be obtain in a different way and it is not clear, in general, how the prior distribution should be chosen. Another question is whether we can generalize the principle also to trajectories, so called the maximum calibre principle \cite{Ghosh}, which could enlarge its scope to non-equilibrium statistical physics. Finally, the main objection urges that such a principle should be consistent with the underlying dynamics and in principle we should be able to derive it from microscopic laws. This is the case for unitary dynamics of finitely dimensional quantum systems for which it was proven \cite{Eisert2011} that its equilibrium states are Gibbs and von Neumann states described by a maximum entropy constrained with expectation values of integrals of motion. Our approach and results complement and take a different angle on problems studied by Burgarth and coauthors \cite{Bu1,Bu2}. 

The main aim of the paper is to provide answers to all above questions and consistently formulate the Jaynes' principle for equilibrium  and more generally asymptotic trajectories of all discrete and continuous quantum Markov trace-preserving processes (QMPs) governing the evolution of finite dimensional quantum systems. This is achieved exploiting algebraic properties of attractors and consequently of asymptotic states of QMPs developed in \cite{Albert2014,Novotny2018}. We derive three versions of Jayne's principles addressing three different settings. The first one concerns the situation when the initial state and thus all expectation values of observables are known. The second principle captures problem when the system is known to be sufficiently near the asymptotic regime and expectation values of only some constants of motion are known. The third version is designed for situation typical in statistical physics when quantum system is supposed to be sufficiently near to some equilibrium state and we know expectation values of some integrals of motion only.

The structure of the paper is as follows. In section \ref{sec:QMPs} we introduce dynamics of discrete and continuous quantum Markov processes, properties of their asymptotic dynamics relevant for this study, and different exponential forms of their asymptotic trajectories. Section \ref{sec:formulation_of-principles}) is devoted to formulations and proofs of all presented Jaynes' principle. Consequent properties of corresponding partition function and entropic relations are derived in section \ref{sec:statistical_properties}. We conclude in section \ref{sec:conclusion}.

\section{Quantum Markov processes and its asymptotics}
\label{sec:QMPs}
In this part we introduce quantum Markov processes and theory for their asymptotic evolution. Throughout the paper, we assume a quantum system with a finitely dimensional Hilbert space $\hil H$ of dimension $N$ and associated Hilbert space of all operators $\set B(\hil H)$ equipped with the Hilbert-Schmidt scalar product $(A,B)=\Tr\{A^{\dagger}B\}$ for $A,B \in \set B(\hil H)$. The state of the system is described by a density operator $\rho \in \set B(\hil H)$ which is positive with trace one.

\subsection{Quantum Markov dynamics}
We focus on the uniform quantum Markovian dynamics, which is memoryless and its evolution depends solely on the actual state of the system. Two types of  Markovian dynamics are distinguished -- quantum Markov chains (QMCHs) proceeding in discrete steps given by propagator $\map T_n = (\map T)^n$ for $n \in \Naturals_0$ steps  \cite{Gudder2008} and continuous quantum Markov dynamical semigroups (QMDSs) with propagator $\map T_t= \exp(\map L t)$ generated by Lindbladian $\map L$ with $t \in \Reals_0^{+}$ being continuous time \cite{Alicki2007}. Generators $\map T$ and $\map L$ act on the space $\set B(\hil H)$.
Consequently, during both types of Markovian evolution, the corresponding propagator $\map T_s$ shifts the state of a quantum system $\rho(t)$ at some time $t$ to the state $\rho(s+t) = \map T_{s} \rho(t)$ at time $s+t$. A properly defined quantum evolution requires completely positive trace-preserving propagators (quantum channels) $\map T_{s}$. Thus, while the one step generator $\map T$ admits the decomposition
\begin{equation}
\rho(n+1) = \map T(\rho(n)) = \sum_i K_j \rho(n) K_j^{\dagger},
\end{equation}
with Kraus operators $K_j$ satisfying $\sum_j K_j^{\dagger}K_j = I$ \cite{Chuang2010}, any Lindbladian $\map L$ must be of the form
\begin{equation}
\label{lindbladian_form}
\map L(X)=i[X,H] + \sum_j \left(L_j X L_j^{\dagger} - \frac{1}{2} \left\{L_j^{\dagger}L_j,X \right\}\right),
\end{equation}
with a Hamiltonian $H$ and Lindblad operators $L_j$ \cite{Alicki2007}. 
A special role play unital QMCHs and QMDSs whose evolution preserve the maximally mixed state $\frac{1}{N} I$, i.e. they must follow $\map T_s(I)=I$, where $I$ stands for the identity operator.

Analogously we can describe quantum Markov evolution of observables in the Heisenberg picture, in which mean values of an evolved observable $B(t)$ have to satisfy 
$$\Tr\left\{B(0)\rho(t)\right\}=\<B(0)\>_{\rho(t)}=\<B(t)\>_{\rho(0)}=\Tr\left\{B(t)\rho(0)\right\}.$$ 
Thus in the Heisenberg picture, discrete QMCH and continuous QMDS is generated by the unital adjoint map $\map T^{\dagger}$ satisfying $\map T^{\dagger}(I)=I$ and $\map L^{\dagger}$ respectively.

\subsection{Asymptotic dynamics of QMPs}
In this work we concentrate on the asymptotic dynamics of QMPs in the Schr\"{o}dinger and Heisenberg picture \cite{Novotny2018}. The former includes stationary states as well as periodic or quasi-periodic trajectories of a given QMP. All these asymptotic states live on the subspace $P\hil H$ \cite{Novotny2012,Amato2023}, i.e. their supports belong to the subspace $P\hil H$, where so-called $\map T$-projector $P$ is the projection onto the support of the maximal invariant state $\sigma$ of the given quantum Markov process. This invariant state with the maximal support is called the $\map T$-state. The $\map T$-state plays a central role in investigation of the asymptotic dynamics. In the case of unital QMPs, the  $\mathcal{T}$-state is the maximally mixed state whose support is the whole Hilbert space $\hil H$ and thus $P=I$. In a general QMP, any $\map T$-state can be, in principle, constructed as the time-averaged state
\begin{equation}
\label{sigma_def}
    \sigma=\lim\limits_{t\rightarrow\infty}\frac{1}{t}\SumInt_{0}^{t}ds\map T_{s}(\rho(0))
\end{equation}
of any strictly positive initial state $\rho(0)$. We stress that $\mathcal{T}$-states are not unique in general. For many QMPs, there is a freedom in choice of the $\map T$-state $\sigma$. Similarly, an asymptotic trajectory $\sigma(t)$ whose all states have its support equal to $P\hil H$ is called a $\map T$-trajectory. Note that a $\map T$-state is a special example of a $\map T$-trajectory. Any $\map T$-trajectory can be constructed as the asymptotic part of the evolution of some strictly positive initial state.

The asymptotic dynamics of QMPs runs inside the operator subspace $P\set B(\hil H)P$ and conveniently can be expressed in terms of its attractors in the Schr\"{o}dinger picture. Let us first treat both types of QMPs separately. We start with QMCHs. Attractors of a QMCH generated by $\map T$ are all operators from the so-called attractor (or asymptotic) space $\set{Atr}(\map T)$ defined as
\begin{equation}
\label{def_attractor_space}
\set{Atr}(\map T) = \bigoplus_{\lambda \in \sigma_{as}}\set{Ker}(\map T - \lambda I),
\end{equation}
\noindent where the asymptotic spectrum $\sigma_{as}$ is
\begin{equation}
    \sigma_{as}=\{\lambda| \lambda \in \sigma(\map T),|\lambda|=1\},  \nonumber
\end{equation}
with  $\sigma(\map T)$ denoting the spectrum of the generator $\map T$. In other words, the attractor space is the span of eigenoperators associated with eigenvalues from the asymptotic spectrum. Let us stress that attractors are, in general, not states (density operators). These are operators from which we can construct the asymptotic evolution of the given QMCH. Indeed, equipped with a linearly independent basis $X_{\lambda,j}$ of individual attractor subspaces corresponding to $\lambda\in\sigma_{as}$ and their dual basis $X^{\lambda,j}$, i.e. $\left( X^{\lambda,j},X_{\mu,l}\right)=\delta_{\lambda \mu}\delta_{jl}$, the asymptotic trajectory of an initial state $\rho(0)$ reads
\begin{equation}
\label{eq_asymptotic_dynamics}
\rho(t\gg 1) = \sum_{\lambda \in \sigma_{as},j} \lambda^{t} X_{\lambda,j}\Tr\left[X^{\lambda,j\dagger}\rho(0)\right],
\end{equation}
where $j$ accounts for possible degeneracies of eigenoperators  \cite{Novotny2018}. 

Similarly, attractors of a QMDS generated by Lindbladian $\map L$ are all operators from the so-called attractor (or asymptotic) space $\set{Atr}(\map L)$ defined in this case as
\begin{equation}
\label{def_attractor_space_2}
\set{Atr}(\map L) = \bigoplus_{a \in \sigma_{as}(\map L)}\set{Ker}(\map L - a I),
\end{equation}
\noindent where the asymptotic spectrum $\sigma_{as}(\map L)$ is
\begin{equation}
    \sigma_{as}(\map L)=\{a| a \in \sigma(\map L), \set{Re}(a)=0\}, \nonumber
\end{equation}
with  $\sigma(\map L)$ denoting spectrum of generator $\map L$. Similarly, considering linearly independent basis  $X_{a,j}$ and their corresponding dual basis $X^{a,j}$, the asymptotic trajectory of an initial state $\rho(0)$ reads \cite{Novotny2018}
\begin{equation}
\label{eq_asymptotic_dynamics_2}
\rho(t\gg 1) = \sum_{a \in \sigma_{as}(\map L),j} \exp(at) X_{a,j}\Tr\left[X^{a,j\dagger}\rho(0)\right],
\end{equation}
where $j$ accounts for possible degeneracies of eigenoperators.

Since we intend to treat both types of QMPs simultaneously and in a unified manner, we formally introduce a one step generator $\map T=\map T_1=\exp(\map L)$ of evolution for QMDS generated by the Lindbladian $\map L$. To avoid eventual ambiguity of its operator powers we define $\map T^t= \exp(\map L t)= \map{T}_t$ for $t \in \Reals_{+}$. Concerning asymptotic dynamics, its attractor space $\set{Atr}(\map T)$ coincides with the attractor space $\set{Atr}(\map L)$. However, we must also keep the relationship between any attractor $X_{\lambda,j} \in \set{Atr}(\map T)$ and its associated $a \in \sigma_{as}(\map L)$ via relation $\lambda = \exp(a)$. In other words, attractors corresponding to the same $\lambda$ behave differently under evolution if they are associated with different $a_1,a_2 \in \sigma_{as}(\map L)$ satisfying 
\begin{equation}
\lambda= \exp(a_1)=\exp(a_2).
\end{equation}
For example, while fixed points of QMCHs are attractors corresponding to $\lambda=1$, fixed points of QMDSs are attractors corresponding to $\lambda=\exp(a)=1$ with only $a=0$. In the rest of the paper we will keep the following notation: we denote attractor spaces of both types of QMPs generated by one step generator $\map T$ as $\set{Atr}(\map T)$ and their asymptotic spectra as $\sigma_{as}$, which allows us to write the asymptotic dynamics of both types of QMPs as (\ref{eq_asymptotic_dynamics}).

Analogously we define the attractor space $\set{Atr}(\map T^{\dagger})$ of the QMP in the Heisenberg picture. Its basis are formed by the dual basis $X^{\lambda,j}$. In general these two attractor spaces do not coincide, but there are important mutual relationships between both sets of attractors \cite{Novotny2018} and consequently, as we will see, between asymptotic states and conserved quantities of the given QMP. Indeed, while the attractor space $\set{Atr}(\map T)$ contains all asymptotic trajectories, the attractor space $\set{Atr}(\map T^{\dagger})$ contains all constants of motion \cite{Albert2014}. Moreover the projected part of the attractor space $P\set{Atr}(\map T^{\dagger})P$ forms a $C^{*}$-algebra, i.e. with each pair of operators it contains also their product, their adjoint operators and it is nonempty since it contains at least the projection $P$. We note that the whole $\set{Atr}(\map T^{\dagger})$ is also a $C^{*}$-algebra but under the so-called Choi-Effros product \cite{Bhat2022}.

\subsection{Exponential form of asymptotic states}
\label{sec:exponential_form_of_states}
All these attractor properties listed in the previous part provide us with an exponential form of all asymptotic states of QMPs.
It was shown that any strictly positive asymptotic state $\rho_{as}$ on $P\hil H$ (\ref{eq_asymptotic_dynamics}) can be written as \cite{Novotny2018}
\begin{equation}
\label{eq_exponential form}
\rho_{as}= \exp\left[\ln \sigma - \sum_j \gamma_j PZ_jP \right],
\end{equation}
where $\{Z_j\}$ is a hermitian basis of the attractor space $\set{Atr}(\map T^{\dagger})$, $\gamma_j$ are real parameters and $\sigma$ is any $\map T$-state. Please note that the state $\sigma$ is strictly positive on the subspace $P\hil H$ and vanishing on its orthogonal complement. Thus the logarithm of the state $\sigma$ is given by the logarithm of its strictly positive part while being zero on its orthogonal part. All the other asymptotic states, which are not strictly positive, are obtained from (\ref{eq_exponential form}) simply by taking the limit $+\infty$ or $-\infty$ for some chosen subset of parameters $\gamma_j$ \cite{Novotny2018}.

Consequently, any discrete or continuous asymptotic trajectory (\ref{eq_asymptotic_dynamics}) can be written as (\ref{eq_exponential form}) with time-dependent coefficients. If all these parameters are finite, it describes an asymptotic trajectory $\omega(t)$ whose each state has the maximal support $P\hil H$
\begin{equation}
\label{eq_exponential form_time}
\omega(t)= \exp\left[\ln \sigma - \sum_j \mu_j(t) PZ_jP \right].
\end{equation}
It is a $\map T$-trajectory. Expressing 
$\ln \sigma = \ln \omega(t) + \sum_j \mu_j(t) PZ_jP$
we can write down any asymptotic trajectory $\rho_{as}(t)$ as
\begin{eqnarray}
\label{eq_exponential form_shifted}
\rho_{as}(t)&=& \exp\left[\ln \sigma - \sum_j \nu_j(t) PZ_jP \right] \nonumber \\ &=& \exp\left[\ln \omega(t) - \sum_j \gamma_j(t) PZ_jP \right].
\end{eqnarray}
It has two consequences. First, the role of the $\map T$-state in the exponential form of asymptotic states can be replaced by the more general $\map T$-trajectory. We will find it very useful later when we formulate one of Jaynes' principles. Second, if we choose a different $\map T$-trajectory it only shifts the original values of parameters $\nu_j(t)$ to $\gamma_j(t)=\nu_j(t) + \mu_j(t)$.

We, instead of analyzing the time dependence of parameters $\gamma_j(t)$ determining asymptotic trajectories, provide a natural description of asymptotic trajectories via constants of motion of QMPs, which also allow to formulate the Jaynes' principle for all QMPs. A time dependent observable $C(t)=C(t)^{\dagger} \in \set{B}(\hil H)$ is called constant of motion if 
\begin{equation}
\label{def_constant_of_motion}
\<C(t+s)\>_{\rho(t)} = \Tr\left[C(t+s)\rho(t)\right] = \<C(s)\>_{\rho(0)}
\end{equation}
for any state trajectory $\rho(t)$, i.e. its expectation value remains constant along all state trajectories irrespective of the reference time $s$. In this definition we take into account the fact that, due to open dynamics, some system degrees of freedom may become suddenly unavailable, leaving the possibility to define constants of motions freely in this redundant dying part. This occurs, for example, in the case of discrete QMP that has, apart of the asymptotic spectrum, generalized eigenvectors corresponding to eigenvalue $0$ only. Since these can be of different ranks, the set of initial states is gradually narrowed down by the evolution until it reaches its asymptotic regime. Outside of this dying part, e.g. for sufficiently long time outside of the space $P\hil H$, a constant of motion can be defined freely, since it has no additional contribution to its mean value. Requiring time homogeneity for constants of motions naturally removes this redundant part. Indeed, the definition (\ref{def_constant_of_motion}) may be rewritten as
\begin{equation}
\Tr\left[T_t^{\dagger}\left(C(t+s)\right)\rho(0)\right] = \Tr\left[C(s)\rho(0)\right]  \nonumber
\end{equation}
valid for any initial state $\rho(0)$. It directly implies that any constant of motion $C(t)$ follows
\begin{equation}
C(t_2-t_1)=\map T_{t_1}^{\dagger}(C(t_2))
\end{equation}
for any $0 \leq t_1 \leq t_2$. Thus the constants of motion are hermitian (observable) trajectories of QMPs in the Heisenberg picture running in the reversed time direction within the attractor space. It implies the set $\map C$ of all constants of motion of a given QMP forms a finite-dimensional real vector space. Indeed, one can easily check that the basis of all asymptotic trajectories can be simply chosen as $\{\lambda^t X^{\lambda,j}\}$ ($\lambda \in \sigma_{as}$). Since attractors always come in mutually conjugated pairs associated with complex conjugated eigenvalues, we can choose basis of $\map C$ as 
\begin{eqnarray}
C_{\lambda,j}^{(+)}(t)&=&\frac{1}{2}\left(\lambda^{t}X^{\lambda,j}+\overline{\lambda^{t}}\left(X^{\lambda,j}\right)^{\dagger}\right), \nonumber \\
C_{\lambda,j}^{(-)}(t)&=&\frac{1}{2i}\left(\lambda^{t}X^{\lambda,j}-\overline{\lambda^{t}}\left(X^{\lambda,j}\right)^{\dagger}\right),
\end{eqnarray}
for each $\lambda \in \sigma_{as}$ and $j$. Note again that powers of eigenvalues from asymptotic spectrum of QMDSs are defined via $\lambda^t=\exp(at)$. Consequently, the vector space of all constants of motion $\map C$ has the same dimension as the attractor spaces, i.e. 
$$\dim(\map C)= \dim(\set{Atr}(\map T))=\dim(\set{Atr}(\map T^{\dagger}).$$ 
Of course, the set $\map C$ contains also real subspace $\map I$ of all integrals of motion of the QMP. These are time independent constants of motion, which always contain the identity operator $I$, since we assume only trace-preserving QMPs keeping $\Tr[\rho(t)]=1$.

Thus, any element $Z_i$ of hermitian basis of $\set{Atr}(\map T^{\dagger})$ employed in (\ref{eq_exponential form_shifted}) can be expressed in terms of constants of motion. Let us assume that apart from the identity operator there are $d$ linearly independent constants of motion and denote elements of the basis as $\{I,C_1(t),C_2(t),\ldots,C_d(t)\}$. Any asymptotic trajectory can be written as 
\begin{eqnarray}
\label{eq_exponential form_general}
\rho_{as}(t)&=& \exp\left[\ln \sigma(t) - \sum_j \gamma_j(t) PC_j(t)P-\alpha(t) P \right] \nonumber \\
&=& \frac{1}{\map Z}\exp\left[\ln \sigma(t) - \sum_j \gamma_j(t) PC_j(t)P \right],
\end{eqnarray}
where we used $PIP=P$ and the fact that $P$ commutes with all operators of the form $PXP$. The parameter $\alpha(t)$ is associated with the integral of motion $I$ and $\mathcal{Z}=exp(\alpha(t))$ is a normalization function. It depends on parameters $\gamma_i(t)$ and ensures $\Tr[\rho(t)]=1$.

We found the form of all asymptotic trajectories including stationary states as well, which very clearly resembles states (\ref{eq_equilibrium_state}). It is based on properties of attractors in both pictures. However, we stress that states of these trajectories do not maximize the von Neumann entropy in general and moreover the form covers also non-stationary asymptotic trajectories. Equipped with the general form of asymptotic states (\ref{eq_exponential form_general}) of QMPs expressed in terms of their constants of motion, it is our next main task to formulate properly the Jaynes' principle for QMPs, i.e. quantum information principle which would equivalently reproduce all asymptotic trajectories (\ref{eq_exponential form_general}) of QMPs.

\section{Formulation of Jaynes' principles}
\label{sec:formulation_of-principles}

In order to write down all asymptotic trajectories, we have to know at least one $\map T$-trajectory, e.g. one $\map T$ state. A special role among them plays the trajectory initiated at the maximal mixed state and its time-averaged $\map T$-state. The following lemma gathers their commutation properties which in turn will be found crucial in formulation of Jaynes' principles.

\begin{lemma}[The maximally mixed state]
\label{lemma_max_mixed_state}
Let $\rho_I(t) = \map T_t\left(\frac{1}{N}I\right)$ be the trajectory of a QMP $\mathcal{T}_{t}$ initiated in the maximally mixed state $\frac{1}{N}I$, $\sigma_I(t)=\rho_I(t \gg 1)$ its asymptotic part (asymptotic trajectory) and $\sigma_I$ its time-averaged state. Then 
\begin{itemize}
\item[1.] $\sigma_I(t)$ is a $\map T$-trajectory and its time average $\sigma_I$ is a $\map T$-state.
\item[2.] All states of $\rho_I(t)$ and thus $\sigma_I(t)$ and $\sigma_I$ commute with an arbitrary constant of motion of the given QMP projected by the corresponding $\mathcal{T}$-projector $P$. Consequently, $\sigma_I(t)$ and $\sigma_I$ commute also with all asymptotic states of the given QMP.
\end{itemize}
\end{lemma}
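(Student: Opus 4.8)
The plan is to settle the first part directly from the definitions of Section~\ref{sec:QMPs} and to reduce the harder second part to a statement about the $C^{*}$-algebra of constants of motion. For Part~1, note that $\frac{1}{N}I$ is strictly positive; hence its asymptotic part $\sigma_I(t)$ is by construction an asymptotic trajectory all of whose states have support $P\hil H$, i.e.\ a $\map T$-trajectory, and by the characterization (\ref{sigma_def}) the time average of any strictly positive state is a $\map T$-state, so $\sigma_I$ is a $\map T$-state. This is essentially immediate.

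The engine for Part~2 is a \emph{tracial} identity. Let $\alg A:=P\set{Atr}(\map T^{\dagger})P$ be the $C^{*}$-algebra of projected constants of motion. For $a,b\in\alg A$ I would compute
\begin{equation}
\Tr[\rho_I(t)\,ab] = \tfrac{1}{N}\Tr[\map T_t^{\dagger}(ab)] = \tfrac{1}{N}\Tr[\map T_t^{\dagger}(a)\,\map T_t^{\dagger}(b)] = \Tr[\rho_I(t)\,ba], \nonumber
\end{equation}
using the definition of the adjoint $\map T_t^{\dagger}$ together with $\Tr[I\,\cdot]=\Tr[\,\cdot\,]$ in the first step, the fact that the peripheral action of $\map T_t^{\dagger}$ restricts to a $*$-automorphism of $\alg A$ in the second, and cyclicity of the trace in the third. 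Thus $\Tr[\rho_I(t)\,[a,b]]=0$ for every $a,b\in\alg A$ and every $t$; equivalently, the functional $\Tr[\rho_I(t)\,\cdot\,]$ on $\alg A$ is a trace.

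To upgrade this into genuine operator commutation I would invoke the attractor decomposition of \cite{Albert2014,Novotny2018}: writing $P\hil H=\bigoplus_l \hil H_l^{(1)}\otimes\hil H_l^{(2)}$, the algebra acts as $\alg A=\bigoplus_l \set{B}(\hil H_l^{(1)})\otimes I_l^{(2)}$ while every projected asymptotic state is block diagonal of the form $\bigoplus_l w_l\otimes\tau_l$, with $\tau_l$ the fixed $\map T$-state on the multiplicity factor $\hil H_l^{(2)}$. Evaluating the tracial identity on $a=a_l\otimes I_l^{(2)}$, $b=b_l\otimes I_l^{(2)}$ supported in a single block gives $\Tr[w_l[a_l,b_l]]=0$ for all $a_l,b_l$, forcing $w_l\propto I_l^{(1)}$. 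Hence $\sigma_I(t)=\bigoplus_l p_l\,\frac{I_l^{(1)}}{\dim \hil H_l^{(1)}}\otimes\tau_l$ lies in the commutant $\alg A'$ and therefore commutes with every $PCP$; the claims for $\sigma_I$ and for all states of $\rho_I(t)$ then follow. Intuitively, starting from the maximally mixed state keeps the state maximally mixed on the noiseless factors $\hil H_l^{(1)}$ on which the constants of motion act, which is exactly what makes it central.

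The main obstacle is precisely this last upgrade. The tracial identity by itself only constrains the block-diagonal part of the commutators, so one must bring in the structure theorem to know that the Schr\"odinger attractors are block diagonal with the fixed $\tau_l$ factor, and one must separately verify that the decaying ($I-P$) component of $\rho_I(t)$ develops no coherence with $P\hil H$ that could spoil commutation at finite times. For the integral-of-motion part there is a cleaner route: $\frac{1}{N}I$ is fixed by every \emph{strong} symmetry $u$, so $u\rho_I(t)u^{\dagger}=\map T_t(u\tfrac{1}{N}I\,u^{\dagger})=\rho_I(t)$ and $[\rho_I(t),PCP]=0$ follows by differentiation; the genuinely delicate case is that of the rotating constants of motion, which are only weak symmetries and for which the $C^{*}$-algebraic argument above appears to be needed.
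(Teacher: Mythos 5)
Your Part~1 is correct and is essentially the paper's own argument. For Part~2 you take a genuinely different route. The paper's proof is a short consequence of the Kraus-operator intertwining relations of Theorem~6 of \cite{Novotny2018}: these give $\map T_{s}(PC(t)P\,B)=PC(t+s)P\,\map T_{s}(B)$ and $\map T_{s}(B\,PC(t)P)=\map T_{s}(B)\,PC(t+s)P$ for \emph{every} operator $B$ (relations (\ref{eq_general_identity})), and since $B=\tfrac{1}{N}I$ commutes with $PC(t)P$ the two left-hand sides coincide, yielding $[\rho_I(s),PC(t+s)P]=0$ at every finite time $s$; commutation with all asymptotic states then follows by writing them in exponential form with $\ln\sigma_I(t+s)$ as reference. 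Your tracial identity is correct --- its middle step, the multiplicativity of $\map T_t^{\dagger}$ on $\alg A=P\set{Atr}(\map T^{\dagger})P$, is exactly the same Theorem-6 input in disguise --- and, combined with the $\bigoplus_l\set B(\hil H_l^{(1)})\otimes\tau_l$ structure of the attractors from \cite{Albert2014,Novotny2018}, it does prove that $\sigma_I(t)$ and $\sigma_I$ lie in the commutant of $\alg A$. That is a legitimate and arguably more conceptual derivation of the asymptotic half of the statement.

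The genuine gap is the finite-time half, which the lemma explicitly asserts (``all states of $\rho_I(t)$'') and which the paper actually uses (the proof of Theorem~\ref{theorem_incomplete_Jaynes} applies the intertwining to $\map T_t(\tfrac{1}{N}I)$ at finite $t$). Your upgrade from $\Tr[\rho\,[a,b]]=0$ to $[\rho,a]=0$ requires $P\rho P$ to have the attractor block form $\bigoplus_l w_l\otimes\tau_l$; at finite $t$ the state $\map T_t(\tfrac{1}{N}I)$ is not in the attractor space, so this step does not apply, and $\Tr[b\,[\rho,a]]=0$ for all $b\in\alg A$ only says the commutator is Hilbert--Schmidt orthogonal to $\alg A$, not that it vanishes; moreover the blocks of $[\rho,PCP]$ connecting $P\hil H$ to $(I-P)\hil H$ are not tested by the tracial functional at all. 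The strong-symmetry shortcut you sketch does not close this for general integrals of motion, because a fixed point of $\map T^{\dagger}$ need not commute with the individual Kraus operators before projection by $P$. The intertwining relations are precisely the tool that controls the finite-time behaviour; once you invoke them, the $C^{*}$-algebraic detour becomes unnecessary.
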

\begin{proof}
1. $I \geq \sigma$ for any state $\sigma$ implies that $\map T_{t}(I) \geq \map T_{t}(\sigma)$. Thus $\rho_I(t)$ has at any time $t$ the highest rank from all evolved states and thus in the asymptotic regime it is a $\map T$-trajectory, denoted as $\sigma_I(t)$. Its time-averaged state $\sigma_I$ is by construction a stationary state of a QMP with the maximal rank, i.e. $\sigma_I$ is a $\map T$-state.

2. To prove the second statement, let $A_k(t)$ be the Kraus operators of the QMP $\mathcal{T}_{t}$. Applying their commutation properties  
\begin{equation}
    \begin{aligned}
    A_k(t)PX^{\lambda,i}P &= \lambda^t PX^{\lambda,i}PA_k(t),\\
    PX^{\lambda,i}PA_k(t)^{\dagger}&=\lambda^{t} A_k(t)^{\dagger} PX^{\lambda,i}P
    \end{aligned}
\end{equation}
(see Theorem 6 in \cite{Novotny2018}) to basis of constants of motion chosen in the form $\{\lambda^{t} X^{\lambda,j} + \overline{\lambda}^{t} X^{\lambda,j \dagger}, i(\lambda^{t} X^{\lambda,j} - \overline{\lambda}^{t} X^{\lambda,j \dagger})\}$ we receive general relations
\begin{eqnarray}
\label{eq_general_identity}
\mathcal{T}_{s}(PC(t)PB) &=& PC(t+s)P \mathcal{T}_{s}(B) \nonumber \\ 
\mathcal{T}_{s}(BPC(t)P) &=& \mathcal{T}_{s}(B) PC(t+s)P
\end{eqnarray}
for any constant of motion $C(t)$ and operator $B$. Choosing $B= \frac{1}{N}I$ we arrive at the desired relation
\begin{equation}
\label{eq_idencom}
PC(t+s)P\rho_I(t)=\rho_I(t)PC(t+s)P.
\end{equation}
Naturally, the same commutation relation is satisfied by states from the $\mathcal{T}$-trajectory $\sigma_I(t)$. Now, since $\sigma_I(t+s)$ is also a $\mathcal{T}$-trajectory for any reference time $s$, any asymptotic trajectory $\rho(t)$ (\ref{eq_exponential form_general}) can be rewritten as
\begin{equation}
    \rho(t)= \exp\left[\ln \sigma_I(t+s) - \sum_j\gamma_j(t) PC_j(t)P -\alpha(t) P\right]. \nonumber
\end{equation}
Combining this form with (\ref{eq_idencom}) we finally receive
\begin{equation}
\label{commutation}
   \sigma_I(t+s)\rho(s)=\rho(s)\sigma_I(t+s)
\end{equation}
for any positive $t$ and $s$, indicating that states from the asymptotic trajectory of the maximally mixed state commute with all asymptotic states.
\end{proof}

The final conceptual piece we need in our quest for a generalized Jaynes principle is a properly chosen functional to be extremized. As we will see, the quantum relative entropy $S(\rho_1|\rho_2)$ of the state $\rho_1$ with respect to the state $\rho_2$ defined as 
\begin{eqnarray*}
S(\rho_1|\rho_2) = \left\{\begin{array}{ll}
\Tr\left[\rho_1 (\ln \rho_1 - \ln \rho_2) \right] & \mbox{if $\Supp(\rho_1) \subseteq \Supp(\rho_2)$} \\ +\infty & \mbox{otherwise}. 
\end{array} \right.
\end{eqnarray*}
is the needed object. Equipped with the knowledge of asymptotic states (\ref{eq_exponential form_general}) we formulate the first Jaynes' principle for a fully known initial state evolved under a quantum Markov process.

\begin{theorem}[Jaynes's principle for fully known initial states]
\label{theorem_jaynes}
Assume a QMP $\map T_{t}$ with a $\map T$-trajectory $\sigma(t)$. Let $P$ be the $\map T$-projector and $\{I,C_1(t), \ldots, C_d(t)\}$ a basis of constants of motion. Let $\rho(0)$ be the initial state with expectation values $c_j=\<C_j(t)\>_{\rho(t)}$. Then individual states 
\begin{equation}
\label{eq_gibbs form}
\rho(t)= \frac{1}{\map Z}\exp\left[\ln \sigma(t) - \sum_j \gamma_j PC_j(t)P \right],
\end{equation}
from its asymptotic trajectory (\ref{eq_asymptotic_dynamics}) minimize, under the given expectation values, the quantum relative entropy $S(\rho(t)|\sigma(t))$ for any sufficiently long time $t$. The Lagrange multipliers $\gamma_j$ are time independent constants regardless of a given QMP and are determined by expectation values $c_j$. The partition function $\mathcal{Z}$ is a function of the Lagrange multipliers $\gamma_j$ and ensures $\Tr[\rho(t)]=1$.
\end{theorem}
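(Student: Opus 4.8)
The plan is to read the statement as a constrained convex minimization performed separately at each sufficiently long time $t$, and then to promote this pointwise characterization to the whole trajectory by means of the intertwining relations of Lemma~\ref{lemma_max_mixed_state}. First I would restrict the domain: since $S(\rho|\sigma(t))=+\infty$ whenever $\Supp(\rho)\not\subseteq\Supp(\sigma(t))=P\hil H$, every competitor with finite relative entropy is supported on $P\hil H$, so the task reduces to minimizing $\Tr[\rho(\ln\rho-\ln\sigma(t))]$ over density operators on $P\hil H$ subject to the affine constraints $\Tr[\rho\,PC_j(t)P]=c_j$ (the support condition lets me insert the projectors) together with $\Tr[\rho]=1$.

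Next I would run the variational step. Introducing a real multiplier $\gamma_j$ for each expectation constraint and $\beta$ for normalization, and using that the Fréchet derivative of $\rho\mapsto\Tr[\rho\ln\rho]$ on $P\hil H$ is $\ln\rho+P$, the stationarity condition becomes $\ln\rho-\ln\sigma(t)+\sum_j\gamma_j PC_j(t)P+(1+\beta)P=0$, which solves to exactly the claimed form (\ref{eq_gibbs form}) with $\map Z$ absorbing $e^{1+\beta}$. Because $S(\,\cdot\,|\sigma(t))$ is strictly convex in its first argument while the constraints are affine, this stationary point is the unique global minimizer; hence a state minimizes the relative entropy under the constraints if and only if it has the exponential form (\ref{eq_gibbs form}). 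It then remains to match the genuine asymptotic state to this minimizer: by Section~\ref{sec:exponential_form_of_states}, equation (\ref{eq_exponential form_general}), the actual $\rho(t)$ produced by (\ref{eq_asymptotic_dynamics}) is already of this exponential form, and it automatically satisfies the constraints because $\langle C_j(t)\rangle_{\rho(t)}=c_j$ is constant along the trajectory for constants of motion. Uniqueness then yields the variational characterization at every fixed $t$.

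The hard part will be the remaining claim that the multipliers $\gamma_j$ do not depend on $t$, and here I would lean on Lemma~\ref{lemma_max_mixed_state}. Choosing the maximally mixed $\map T$-trajectory $\sigma_I(t)$ as reference, whose states commute with every $PC_j(t)P$, I can factor $\exp[\ln\sigma_I(t)-\sum_j\gamma_j(t)PC_j(t)P]=\sigma_I(t)\exp[-\sum_j\gamma_j(t)PC_j(t)P]$ and expand the second factor as a norm-convergent power series in products of the operators $PC_j(t)P$. Applying $\map T_s$ and pushing it through each monomial with the intertwining identities (\ref{eq_general_identity}) sends $\sigma_I(t)\mapsto\sigma_I(t+s)$ and every $PC_j(t)P\mapsto PC_j(t+s)P$, so that $\rho(t+s)=\map T_s(\rho(t))$ reappears in the exponential form at time $t+s$ with the \emph{same} coefficients $\gamma_j(t)$. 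Comparing this with the intrinsic representation of $\rho(t+s)$ through its own multipliers $\gamma_j(t+s)$, and invoking the linear independence of $\{P,PC_j(t+s)P\}$ (equivalently the uniqueness of $\ln\rho(t+s)$), forces $\gamma_j(t+s)=\gamma_j(t)$ for all $s$; the multipliers are therefore constants fixed solely by the data $c_j$.

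Finally I would transfer the conclusion to an arbitrary reference $\map T$-trajectory $\sigma(t)$: the very same intertwining argument applied to $\sigma(t)$, viewed itself as an asymptotic state with full support $P\hil H$, shows that its multipliers relative to $\sigma_I(t)$ are likewise time independent, so the constant shift between the two references leaves the multipliers of $\rho(t)$ constant with respect to $\sigma(t)$ as well. I expect the genuine obstacles to be (i) justifying the factorization and the term-by-term action of $\map T_s$ on the exponential series, both of which rest on the commutation property supplied by Lemma~\ref{lemma_max_mixed_state}, and (ii) treating the boundary asymptotic states whose support is strictly smaller than $P\hil H$, which are recovered as the limits $\gamma_j\to\pm\infty$ indicated in Section~\ref{sec:exponential_form_of_states}.
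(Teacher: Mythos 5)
Your proposal is correct and follows essentially the same route as the paper: restriction to states supported on $P\hil H$, a Lagrange-multiplier variational argument yielding the exponential form, and then time-independence of the multipliers via the commutation/intertwining properties of the maximally mixed $\map T$-trajectory $\sigma_I(t)$ applied term by term to the exponential series, followed by the shift argument for a general reference trajectory. The only additions beyond the paper's argument are your explicit appeal to strict convexity for uniqueness of the minimizer and your flagging of the $\gamma_j\to\pm\infty$ boundary cases, both of which are consistent with (and implicit in) the paper's treatment.
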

\begin{proof}
Let us assume an arbitrary initial state $\rho(0)$. In the following we show that the Jaynes' principle formulated in Theorem \ref{theorem_jaynes} provides exactly the same asymptotic trajectory (\ref{eq_gibbs form}). The quantum relative entropy (\ref{eq_quantum_relative_entropy}) is finite if and only if $\rho(t) = P\rho(t)P$. Thus, looking for the minimum of $S(\rho(t)|\sigma(t))$ we can restrict ourselves to states supported on the subspace $P\hil H$. This modifies known expectation values of constants of motion into constrains
\begin{eqnarray}
\<C_j(t)\>_{\rho(t)} &=& \Tr\{C_j(t)P\rho(t)P\}=\Tr\{PC_j(t)P\rho(t)\}=c_j, \nonumber \\
\<I\>_{\rho(t)} &=& \Tr\{P\rho(t)\}=1.
\label{app_constrains}
\end{eqnarray}
We seek a constrained minimum of $S(\rho(t)|\sigma(t))$ at any asymptotic time $t$. Following the standard procedure for constrained extremes we introduce the Lagrange functional
\begin{equation}
    \Lambda = \Tr\left[\rho(t) \left(\ln \frac{\rho(t)}{\sigma(t)} + \sum_j \gamma_j(t) PC_j(t)P + \alpha(t) P \right)\right], \nonumber
\end{equation}
which depends on Lagrange multipliers $\gamma_j(t)$ and $\alpha(t)$ mutually associated with each constrain in (\ref{app_constrains}). Its variation $ \delta\Lambda$, i.e. the first order leading linear contribution,
\begin{equation}
    \delta\Lambda = \Tr\biggl[\delta\rho(t) \biggl(\ln \frac{\rho(t)}{\sigma(t)}+(1+\alpha(t))P + \sum_j \gamma_j(t) PC_j(t)P \biggr)\biggr] \nonumber
\end{equation}
must be equal to zero for any hermitian variation $\delta\rho(t)$ of the extremal state $\rho(t)$. We find that the only solution of this condition at a given asymptotic time $t$ is exactly the state (\ref{eq_gibbs form}). Since the prime role of $\map Z(t) = \exp[1+\alpha(t)]$ is to normalize the trace of the state $\rho(t)$ to one, it coincides with the definition given in (\ref{eq_gibbs form}) and we call it the partition function.

In the following we show that the Lagrange multipliers are time-independent constants regardless of a given QMP. According to (\ref{eq_exponential form_shifted}), changing a $\mathcal{T}$-state (trajectory) manifests itself only by shifting the parameters $\gamma_i(t)$ and we are therefore free to choose an arbitrary $\mathcal{T}$-trajectory. Let us first exploit the $\mathcal{T}$-trajectory $\sigma_I(t)$ and apply (\ref{eq_general_identity}) to evolution of an arbitrary asymptotic trajectory $\rho(t)$
    \begin{eqnarray*}
        &&\rho(t+s)=\mathcal{T}_s(\rho(t)) = \nonumber \\
        && \frac{1}{\map Z(t)}\mathcal{T}_s\left(\exp\left[-\sum_j\gamma_j(t)PC_j(t)P\right]\sigma_I(t)\right) \nonumber \\ &=&
        \frac{1}{\map Z(t)}\sum_{N=0}^{\infty}\frac{(-1)^N}{N!}\mathcal{T}_s\left(\left(\sum_j\gamma_j(t)PC_j(t)P\right)^N\sigma_I(t)\right) \nonumber \\&=&
        \frac{1}{\map Z(t)}\sum_{N=0}^{\infty}\frac{(-1)^N}{N!}\left(\sum_j\gamma_j(t)PC_j(t+s)P\right)^N\mathcal{T}_s(\sigma_I(t))\\&=&
        \frac{1}{\map Z(t)}\exp\left[\ln\sigma_I(t+s)-\sum_j\gamma_j(t)PC_j(t+s)P\right],
\end{eqnarray*}
which implies $\gamma_j(t+s)=\gamma_j(t)$, $\map Z(t+s)=\map Z(t)$ and therefore the values of the Lagrange multipliers associated with the $\mathcal{T}$-trajectory $\sigma_I(t)$ are fixed by their initial values $\gamma_j(t)=\gamma_j(0)\equiv\gamma_j$, $\map Z(t)=\map Z(0)\equiv \map Z$. The same reasoning applies to Lagrange multipliers for a general $\mathcal{T}$-trajectory $\sigma(t)$ associated with the $\mathcal{T}$-trajectory $\sigma_I(t)$ and we can finally conclude that the asymptotic trajectory $\rho(t)$ takes the form (\ref{eq_gibbs form}).
\end{proof}
A natural option is to choose $\sigma(t)$ as some $\mathcal{T}$-state, but as we will see, by keeping  the general $\mathcal{T}$-trajectory in the theorem will allow us to formulate the Jaynes' principle for incompletely known asymptotic states.

Let us focus on important details and consequences concerning the theorem. First of all, we prove that despite the fact that the expectation values constitute time dependent constrains, Lagrange multipliers $\gamma_j$ are time-independent. Their physical meaning is prescribed solely by expectation values of constants of motion of the given QMP and can not be, as expected, revealed by the Jaynes' principle itself. Second, the theorem determines all asymptotic trajectories (\ref{eq_asymptotic_dynamics}) including those with non-strictly positive states. Lagrange multipliers can take arbitrary real values and non-strictly positive asymptotic states emerge when some of these parameters are taken in the limit $\pm \infty$. It is actually a well known situation in statistical physics, e.g. in the canonical ensemble, ground states are obtained in the limit $\beta=1/T$ ($T$ denoting a temperature of the canonical system) going to $+\infty$ \cite{Balian}. Third, the Theorem \ref{theorem_jaynes} states that expectation values of all constants of motion are the only information required in the Jaynes' principle to determine individual asymptotic trajectory of any initial state. From  (\ref{eq_asymptotic_dynamics}) with $\sigma(t)=\sigma$, one sees that this asymptotic trajectory corresponds to a stationary state if and only if all expectation values of non-stationary constants of motion are zero. Thus, if additional information is provided stating that the resulting asymptotic state is stationary, the modified Jaynes' principle for asymptotically stationary states reduces to minimization of quantum relative entropy $S(\rho(t)|\sigma)$ only with respect to the expectation values of integrals of motion, $\sigma(t)=\sigma$ and all Lagrange multipliers corresponding to non-stationary constants of motion are set to zero in (\ref{eq_exponential form}).

This brings us to another interesting problem. In statistical physics, typically only some expectation values of constants of motion are known and they constitute our only available information about a given equilibrium state. Can we adopt the same approach also in this scenario and simply set Lagrange multipliers associated with unknown (or ignored) expectation values of constants of motion to zero? The answer is No. This leads to a contradiction which is easy to demonstrate on the case when we lack knowledge of all expectation values. Indeed, resetting all Lagrange multipliers in equivalent representations (\ref{eq_gibbs form}) with different choices of the $\map T$-state $\sigma$ results in an ambiguity of the asymptotic state. In fact, it is a very different task which, to be resolved, we must properly incorporate the given lack of knowledge via evolution into the description of asymptotic trajectories. We proceed in the following way. We stress, the initial state of the system is not known. In principle, it could be any state from trajectories consistent with the given knowledge. Thus also the initial time is not relevant. As all states evolve eventually into states living exclusively on the subspace $P\hil H$, we choose the initial state at $t=0$ being already supported on this subspace. Expectation values of constants of motion then read
\begin{equation}
    \label{conmo_inf}
    \<C_i(0)\>_{\rho(0)} =  \Tr \left\{\rho(0) PC_i(0)P \right\}= c_i \quad  i \in \set{K},
\end{equation}
where the set $\set K$ collects indices of linearly independent constants of motion whose expectation values are known. The state comprising solely the given constraints reads
\begin{equation}
\label{eq_initial_asymptotic_state}
    \rho(0) = \frac{1}{\map Z}\exp\left(-\sum_{i \in \set{K}} \gamma_i PC_i(0)P\right).
\end{equation}
However, this is not an asymptotic state yet in general. To get the asymptotic state with the prescribed expectation values we simply let it evolve towards its asymptotic trajectory, which is determined by the Jaynes' principle in the following form.

\begin{theorem}[Jaynes's principle for partially known asymptotic states]
\label{theorem_incomplete_Jaynes}
Consider $\map T_{t}$ be a QMP with $\sigma_I(t)$ being the asymptotic evolution of the maximally mixed state and $P$ the $\map T$-projector. Let the quantum system be in the asymptotic regime of the evolution and expectation values of some linearly independent constants of motion ($c_j=\Tr\{\rho(t)C_j(t)\}$ with $j \in \set K$) be the only knowledge provided about its asymptotic trajectory. Then states $\rho(t)$ from the corresponding asymptotic trajectory minimize, under the given expectation values, quantum relative entropy $S(\rho(t)|\sigma_I(t))$ for any sufficiently long time $t$. The asymptotic trajectory takes the form of generalized Gibbs and von Neumann state
\begin{equation}
        \label{eq_gibbs_states_incomplete}
    \rho(t) = \frac{1}{\map Z}\exp\left(\ln \sigma_I(t) - \sum_{j \in \set{K}} \gamma_j PC_j(t)P\right),
\end{equation}
where the partition function $\map Z$ of time-independent Lagrange multipliers $\gamma_j$ ensures $\Tr \left[\rho(t)\right]=1$.
\end{theorem}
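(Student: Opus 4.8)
The plan is to follow the same variational scheme as in the proof of Theorem~\ref{theorem_jaynes}, but with the two modifications dictated by the partial knowledge: the sum over constraints runs only over the index set $\set K$, and the reference trajectory is fixed to be $\sigma_I(t)$ rather than an arbitrary $\map T$-trajectory. First I would note that, since $S(\rho(t)|\sigma_I(t))$ is finite only when $\Supp(\rho(t))\subseteq\Supp(\sigma_I(t))=P\hil H$, the search for the constrained minimum may be restricted to states with $\rho(t)=P\rho(t)P$. On this set the given data become the linear constraints $\Tr\{PC_j(t)P\rho(t)\}=c_j$ for $j\in\set K$, together with $\Tr\{P\rho(t)\}=1$; because the $C_j$ are constants of motion, these constraint values are the same at every time. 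Introducing Lagrange multipliers $\gamma_j(t)$ (for $j\in\set K$) and $\alpha(t)$ and setting the first variation of $\Tr[\rho(t)(\ln\rho(t)-\ln\sigma_I(t)+\sum_{j\in\set K}\gamma_j(t)PC_j(t)P+\alpha(t)P)]$ to zero for every hermitian $\delta\rho(t)$ reproduces, exactly as in Theorem~\ref{theorem_jaynes}, the exponential form (\ref{eq_gibbs_states_incomplete}) at each fixed asymptotic time $t$. Convexity of the relative entropy on the convex constraint set guarantees that this critical point is the global minimum.

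The genuinely new step is to show that these pointwise minimizers fit together into a single asymptotic trajectory of the QMP with time-independent multipliers, and this is precisely where the special choice $\sigma_I(t)$ enters. By part~2 of Lemma~\ref{lemma_max_mixed_state}, $\sigma_I(t)$ and hence $\ln\sigma_I(t)$ commutes with every projected constant of motion $PC_j(t)P$, so the exponent factorizes and the minimizer can be written as $\rho(t)=\map Z^{-1}\exp(-\sum_{j\in\set K}\gamma_j(t)PC_j(t)P)\,\sigma_I(t)$. Expanding the exponential in its power series and propagating by $\map T_s$ term by term, I would apply the intertwining identities (\ref{eq_general_identity}) repeatedly to the leftmost factor at each step: every factor $PC_j(t)P$ is carried to $PC_j(t+s)P$ while $\sigma_I(t)$ is carried to $\map T_s(\sigma_I(t))=\sigma_I(t+s)$. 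Resumming and using the commutation again at time $t+s$ gives $\map T_s(\rho(t))=\map Z^{-1}\exp(\ln\sigma_I(t+s)-\sum_{j\in\set K}\gamma_j(t)PC_j(t+s)P)$, which is exactly the minimizer at time $t+s$ but with multipliers and partition function inherited unchanged from time $t$. Comparing with (\ref{eq_gibbs_states_incomplete}) at time $t+s$ then forces $\gamma_j(t+s)=\gamma_j(t)$ and $\map Z(t+s)=\map Z(t)$, so the multipliers are constants $\gamma_j$ and the family $\{\rho(t)\}$ is a bona fide asymptotic trajectory generated by $\map T_s$. Choosing the reference time so that $\sigma_I(0)$ is maximally mixed on $P\hil H$, the minimizer at $t=0$ is the state (\ref{eq_initial_asymptotic_state}) built from the given constraints alone, and the propagation just described identifies its asymptotic evolution with (\ref{eq_gibbs_states_incomplete}) at every later time.

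The main obstacle, and the point I expect to require the most care, is not the Lagrange calculation, which is routine, but justifying that $\sigma_I(t)$ is the \emph{correct and admissible} reference. The commutation in Lemma~\ref{lemma_max_mixed_state} is exactly what both permits the factorization of the exponential and prevents the propagation $\map T_s$ from generating contributions along constants of motion outside $\set K$, which would otherwise spoil the form (\ref{eq_gibbs_states_incomplete}). I would emphasize that, unlike the fully-known case where replacing the reference by a generic $\map T$-trajectory merely shifts the multipliers via (\ref{eq_exponential form_shifted}), here such a replacement is \emph{not} harmless: with only partial constraints the unconstrained directions are fixed by the reference state itself, and a non-commuting reference would break the term-by-term propagation. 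Thus the use of $\sigma_I(t)$, the maximally non-committal evolution on the unobserved constants of motion, is both the natural MaxEnt-type choice and the one for which the consistency argument closes.
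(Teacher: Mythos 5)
Your proposal is correct and rests on exactly the two ingredients the paper uses — the commutation property of $\sigma_I(t)$ from Lemma~\ref{lemma_max_mixed_state} and the intertwining identities (\ref{eq_general_identity}) — but it runs the logic in the opposite direction. The paper starts from the constraint-only state (\ref{eq_initial_asymptotic_state}), evolves it forward with (\ref{eq_general_identity}) to obtain $\frac{1}{\map Z}\exp[\ln\rho_I(t)-\sum_{j\in\set K}\gamma_j PC_j(t)P]$, passes to the asymptotic part $\sigma_I(t)$, and only then invokes Theorem~\ref{theorem_jaynes} for the minimization property; the role of $\sigma_I(t)$ thus emerges automatically from the identity factor hidden in $\exp(-\sum_{j\in\set K}\gamma_j PC_j(0)P)$. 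You instead perform the constrained Lagrange minimization with reference $\sigma_I(t)$ at each fixed time and then verify that the pointwise minimizers are intertwined by $\map T_s$ with unchanged multipliers. Your direction makes the extremal characterization explicit rather than delegated to Theorem~\ref{theorem_jaynes}, while the paper's direction better motivates \emph{why} $\sigma_I(t)$ (and not some other $\map T$-trajectory) is the reference encoding "no prior knowledge" — a point you address only by the qualitative remarks in your last paragraph. One minor slip: there is in general no reference time at which $\sigma_I(0)$ is maximally mixed on $P\hil H$ (i.e.\ proportional to $P$), so your identification of the $t=0$ minimizer with (\ref{eq_initial_asymptotic_state}) does not literally hold; the paper avoids this by treating (\ref{eq_initial_asymptotic_state}) as a non-asymptotic seed state that only reaches the form (\ref{eq_gibbs_states_incomplete}) after evolution. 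This does not affect the validity of your main argument, which stands on its own without that identification.
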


\begin{proof}
Suppose our knowledge about the initial state is reduced to information about expectation values of some of constants of motion $C_j(t)$ with $j\in\textsf{K}$, supplemented by the knowledge that the system is known to be already in the asymptotic regime.  At time $t=0$ the state supported on the subspace $P\mathscr{H}$ with the known expectation values (\ref{conmo_inf}) is given by (\ref{eq_initial_asymptotic_state}). However, this state is not asymptotic yet. To satisfy the second requirement we have to let it evolve to its asymptotic trajectory. According to (\ref{eq_general_identity}) and Lemma \ref{lemma_max_mixed_state}, the evolution of such state reads
\begin{eqnarray*}
\rho(t)&=&\mathcal{T}_{t}(\rho(0))=\frac{1}{\map Z}\mathcal{T}_{t}\left(\frac{1}{N}I\right)\exp\left[-\sum_{j\in\textsf{K}}\gamma_jPC_j(t)P\right]\\ &=&
    \frac{1}{\map Z}\exp\left[\ln\rho_I(t)-\sum_{j\in\textsf{K}}\gamma_jPC_j(t)P\right].
\end{eqnarray*}
After a sufficiently large time $t$, we obtain the asymptotic trajectory $\rho_{\infty}(t)$
\begin{equation}
    \rho_{\infty}(t)=\frac{1}{\map Z}\exp\left[\ln\sigma_I(t)-\sum_{j\in\textsf{K}}\gamma_jPC_j(t)P\right]. \nonumber
\end{equation}
It is the asymptotic trajectory of the given QMP representing the given knowledge (\ref{conmo_inf}). According to Theorem 1, such a quantum trajectory minimizes the relative quantum entropy $S(\rho(t)|\sigma_I(t))$ with respect to the given knowledge about the asymptotic state.
\end{proof}

We stress that despite the seeming similarity of Theorems \ref{theorem_jaynes} and \ref{theorem_incomplete_Jaynes} they treat different situations. While Theorem \ref{theorem_jaynes} deals with asymptotic evolution of the given initial state, Theorem \ref{theorem_incomplete_Jaynes} concerns quantum systems being in the asymptotic regime whose expectation values of some constants of motion are known. Since we do not know the system's initial state and consequently the time when the dynamics of the system enters into its asymptotic regime, the Jaynes' principle formulated in Theorem \ref{theorem_incomplete_Jaynes} determines the asymptotic trajectory, but it can not provide the actual position of the quantum system on this asymptotic trajectory. Another difference lies in treatment of parameters $\gamma_j$. For a specific choice of the $\map T$-trajectory $\sigma(t)$ in Theorem \ref{theorem_jaynes}, some of the parameters $\gamma_j$ may be equal to zero and thus the corresponding constants of motion $C_j(t)$ do not appear in the Gibbs form (\ref{eq_gibbs form}). This is much different to the situation treated by Theorem \ref{theorem_incomplete_Jaynes}, in which the $\map T$-trajectory is uniquely specified. Constants of motion $C_j(t)$, $j\notin\set{K}$, are completely ignored in the Gibbs and von Neumann state (\ref{eq_gibbs_states_incomplete}) and thus they do not have assigned any parameters $\gamma_j$.

Theorem \ref{theorem_incomplete_Jaynes} reveals the unique role of the evolution of the maximally mixed state among all possible trajectories. At this point let us stress that even in the case, when only expectation values of integrals of motion are known, stationarity of the resulting asymptotic trajectory is not guaranteed. It is ensured only if the maximally mixed state is steady too. But how to treat the situation when we know that the quantum system has reached a stationary asymptotic state, or the asymptotic dynamics is fast enough that we observe its averaged stationary part solely, and additionally we know expectation values of only some of its integrals of motion? Again, we must integrate both types of information into the resulting state consistent with evolution of the given QMP. The following theorem deals with this issue.
\begin{theorem}[Jaynes's principle for stationary states with incomplete knowledge]
\label{theorem_stationary_incomplete}
Let $\map T_t$ be a QMP and $\sigma_I$ be a its time-averaged trajectory of the maximally mixed state. Suppose the system is in a partially unknown stationary state $\rho$ of the QMP and expectation values $c_j=\Tr\{\rho I_j\}$ of some linearly independent integrals of motion $I_j$, $j \in \set K$, be the only available knowledge about this state. Then the state $\rho$ minimizes, under the given expectation values, quantum relative entropy $S(\rho|\sigma_I)$ and takes the form of generalized Gibbs and von Neumann state
\begin{equation}
        \label{eq_stationary_gibbs_states_incomplete}
    \rho = \frac{1}{\map Z}\exp\left(\ln \sigma_I - \sum_{j \in \set{K}} \gamma_j PI_jP\right),
\end{equation}
where the partition function $\map Z$ of time-independent Lagrange multipliers $\gamma_j$ ensures $\Tr \left[\rho\right]=1$.
\end{theorem}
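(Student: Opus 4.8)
The plan is to treat this statement as the stationary counterpart of Theorem~\ref{theorem_incomplete_Jaynes}, exploiting the fact that integrals of motion are time-independent. First I would record the crucial simplification: since each $I_j$ is a time-independent constant of motion, the general identity (\ref{eq_general_identity}) collapses to the statement that $PI_jP$ commutes through the channel, $\map T_s(PI_jP\,B)=PI_jP\,\map T_s(B)$ and $\map T_s(B\,PI_jP)=\map T_s(B)\,PI_jP$, and Lemma~\ref{lemma_max_mixed_state} guarantees that the time-averaged state $\sigma_I$, being a $\map T$-state, commutes with every $PI_jP$. These two observations are what distinguish the stationary case from the trajectory case and will do essentially all the work.

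The variational half would follow the Lagrange-functional computation of Theorem~\ref{theorem_jaynes} verbatim, now with the single fixed reference state $\sigma_I$ in place of a $\map T$-trajectory and with constraints $\Tr\{\rho\,PI_jP\}=c_j$ ($j\in\set K$) together with $\Tr\{P\rho\}=1$. Finiteness of $S(\rho|\sigma_I)$ forces $\Supp(\rho)\subseteq\Supp(\sigma_I)=P\hil H$, so the search is restricted to states on $P\hil H$; setting the variation of $\Lambda=\Tr[\rho(\ln\rho-\ln\sigma_I+\sum_j\gamma_j PI_jP+\alpha P)]$ to zero for all hermitian $\delta\rho$ yields the unique critical point (\ref{eq_stationary_gibbs_states_incomplete}), which by convexity of the relative entropy in its first argument is the constrained minimizer. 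Because the $I_j$ and the numbers $c_j$ carry no time dependence, the multipliers $\gamma_j$ are manifestly time-independent constants fixed by the $c_j$ through $\map Z$.

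The conceptual heart of the proof, and the step I expect to be the main obstacle, is to verify that this minimizer is genuinely stationary --- precisely the point at which Theorem~\ref{theorem_incomplete_Jaynes} could only produce a trajectory. Here I would use the commutation of $\sigma_I$ with $PI_jP$ to write $\rho=\frac{1}{\map Z}\sigma_I\exp(-\sum_{j\in\set K}\gamma_j PI_jP)$, expand the exponential in a power series, pull $\map T_t$ through each factor $PI_jP$ using the collapsed identity above, and finally invoke $\map T_t(\sigma_I)=\sigma_I$ to obtain $\map T_t(\rho)=\rho$. The essential difference from Theorem~\ref{theorem_incomplete_Jaynes} is transparent in this calculation: for a generic constant of motion, pulling the channel through would send $PC_j(t)P\mapsto PC_j(t+s)P$ and $\sigma_I(t)\mapsto\sigma_I(t+s)$, producing a moving trajectory, whereas the time-independence of $I_j$ and the stationarity of the time-averaged $\sigma_I$ leave both factors untouched.

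Finally, to tie the minimizer to the stated prior knowledge I would mirror the construction of Theorem~\ref{theorem_incomplete_Jaynes}: start from the state $\rho(0)=\frac{1}{\map Z}\exp(-\sum_{j\in\set K}\gamma_j PI_jP)$ supported on $P\hil H$ and carrying the prescribed expectation values, evolve it with the collapsed identity to get $\rho(t)=\frac{1}{\map Z}\map T_t(\frac{1}{N}I)\exp(-\sum_{j\in\set K}\gamma_j PI_jP)$, and take the time average so that $\map T_t(\frac{1}{N}I)$ is replaced by $\sigma_I$; this reproduces (\ref{eq_stationary_gibbs_states_incomplete}) and shows that the stationarity assumption is consistent with, rather than an extra constraint on, the relative-entropy minimizer. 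The only remaining routine point is to confirm that $\sigma_I$ is the correct reference here --- equivalently, that replacing the possibly non-stationary $\map T$-trajectory $\sigma_I(t)$ of Theorem~\ref{theorem_incomplete_Jaynes} by its stationary time average is exactly what the added knowledge of stationarity dictates.
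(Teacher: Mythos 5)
Your proposal is correct and takes essentially the same route as the paper: the decisive ingredients are identical (the pull-through identity (\ref{eq_general_identity}) specialized to time-independent $I_j$, the commutation properties of Lemma \ref{lemma_max_mixed_state}, and the time average that replaces $\sigma_I(t)$ by $\sigma_I$ so the exponential factorizes), and your final paragraph is precisely the paper's argument. The direct Lagrangian computation and the explicit check that $\map T_t(\rho)=\rho$ are harmless additions to what the paper obtains more briefly by citing Theorems \ref{theorem_jaynes} and \ref{theorem_incomplete_Jaynes}.
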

\begin{proof}
If the system is in some stationary state, it is in the asymptotic regime already. In view of Theorem \ref{theorem_incomplete_Jaynes}, the system in the asymptotic regime being determine solely by expectation values of some integrals of motion evolves as 
\begin{equation}
\label{eq_nonstationary_incomplete}
    \rho(t) = \frac{1}{\map Z}\exp\left(\ln \sigma_I(t) - \sum_{j \in \set{K}} \gamma_j PI_jP\right), \nonumber
\end{equation}
where $\sigma_I(t)$ is asymptotic $\map T$-trajectory initiated at the maximally mixed state. All these asymptotic states have the same demanded expectation values of given integrals of motion, but it is not stationary as it is required. To get a stationary state we have to make time-average of the trajectory $\rho(t)$ given above. Averaging can not increase amount of information imprinted into the state and consequently the resulting stationary state $\rho$ incorporates the minimum information given by expectation values of the corresponding integrals of motion. Exploiting the Lemma \ref{lemma_max_mixed_state}, we find that
\begin{eqnarray}
\rho &=& \lim\limits_{t\rightarrow\infty}\frac{1}{t}\SumInt_{0}^{t}ds \frac{1}{\map Z}\exp\left(\ln \sigma_I(s) - \sum_{j \in \set{K}} \gamma_j PI_jP\right) \nonumber \\
&=& \left(\underbrace{\lim\limits_{t\rightarrow\infty}\frac{1}{t}\SumInt_{0}^{t}ds \sigma_I(s)}_{\sigma_I}\right) \frac{1}{\map Z}\exp\left(- \sum_{j \in \set{K}} \gamma_j PI_jP\right) \nonumber \\
&=& \frac{1}{\map Z}\exp\left(\ln \sigma_I - \sum_{j \in \set{K}} \gamma_j PI_jP\right). \nonumber
\end{eqnarray}
According to Theorem \ref{theorem_jaynes}, this state minimizes under the expectation value constrains $c_j=\Tr\{\rho I_j\}$ with $j \in \set K$ the quantum relative entropy $S(\rho|\sigma_I)$.
\end{proof}

The last Theorem \ref{theorem_stationary_incomplete} treats exactly the setting well known in statistical physics --  equilibrium states characterized by few expectation values of integrals of motion. We find that all above formulated Jaynes' principles as well as the resulting general Gibbs and von Neumann forms of asymptotic states differs from the MaxEnt principle and associated Gibbs and von Neumann states (\ref{eq_equilibrium_state}) used in statistical physics. However, if the Markov evolution preserves the maximally mixed state (unital QMPs), then projection $P$ is the identity operator. The trajectory $\sigma(t)$ can be chosen as the fixed maximally mixed state and may be therefore included into the partition function and we get finally Gibbs form which for stationary states coincides with (\ref{eq_equilibrium_state}). From this perspective, Theorems \ref{theorem_jaynes}, \ref{theorem_incomplete_Jaynes} and \ref{theorem_stationary_incomplete}  prove the MaxEnt principle for equilibrium states of unital channels and generalizes it to Jaynes'principle for all asymptotic trajectories of any quantum Markov process.

\section{Properties of the partition function}
\label{sec:statistical_properties}
Surprisingly, despite the different generalized Gibbs and von Neumann form (\ref{eq_gibbs form}) of asymptotic states, the partition function keeps its special role in calculating statistical properties like expectation values of observables, quantum relative entropy and its change.
Indeed, the partition function $\map Z$ independently of the choice of a $\mathcal{T}$-trajectory provides the expectation values of constants of motion
\begin{equation}
\label{eq_mean_values_real}
\<C_j(t)\>_{\rho(t)} = - \left(\frac{\partial \ln \map Z}{\partial \gamma_j}\right)_{\gamma_{k \neq j}}.
\end{equation}
Proof of this formula exploits the fact that the partition function is a trace functional of the form $F(X)= \Tr f(X)$ on the space of all hermitian operators. If $f$ is an analytic function, the variation of the functional (the first order change in term of $\delta X$) reads $\delta F_X(\delta X)=\Tr\left\{ \delta X f^{'}(X)\right\}$. With the help of this expression we get for $X= \ln \sigma(t) - \sum_{k} \gamma_k PC_k(t)P$

\begin{eqnarray}
 &&\left(\frac{\partial \ln \map Z}{\partial \gamma_j}\right)_{\gamma_{k \neq j}} = \nonumber \\ && \frac{1}{\map Z}\lim_{\triangle \gamma_j \rightarrow 0} \frac{\Tr\left\{\exp\left(X - \triangle \gamma_j PC_j(t)P\right) \right\}- \Tr\left\{\exp\left( X \right)\right\}}{\triangle \gamma_j} = \nonumber \\
&&\frac{1}{\map Z}\lim_{\triangle \gamma_j \rightarrow 0} \frac{\Tr\left\{-\triangle \gamma_j PC_j(t)P\exp\left(X\right) \right\}}{\triangle \gamma_j} = -\<C_j(t)\>_{\rho(t)}. \nonumber
\end{eqnarray}

Different choices of $\mathcal{T}$-trajectories $\sigma_1(t)$ and $\sigma_2(t)$ imply different partition functions with correspondingly shifted Lagrange multipliers. However, as we show (\ref{eq_mean_values_real}) is independent of the particular choice. Indeed, expressing the $\mathcal{T}$-trajectory $\sigma_1(t)$ as $\sigma_1(t)=1/\map Z_1 \exp\left[\ln \sigma_2(t) - \sum_j \omega_j PC_j(t)P \right]$, the asymptotic trajectory (\ref{eq_gibbs form}) can be written as
\begin{eqnarray}
\rho(t) &=& \frac{1}{\map Z}\exp\left(\ln \sigma_1(t) -\sum_j \gamma_j PC_j(t)P\right) \nonumber \\
&=& \frac{1}{\map Z^{'}}\exp\left(\ln \sigma_2(t) - \sum_j \beta_j PC_j(t)P \right) \label{eq_exchange_of_trajectories}
\end{eqnarray}
with $\map Z^{'}=\map Z\map Z_1$ and $\beta_j=\gamma_j+\omega_j$. Therefore, we get
\begin{equation}
\frac{\partial \ln \map Z^{'}}{\partial \beta_j} = \frac{1}{\map Z^{'}}\frac{\partial \map Z^{'}}{\partial \gamma_j}\frac{\partial \gamma_j}{\partial \beta_j} = \frac{1}{\map Z \map Z_1} \map Z_1\frac{\partial \map Z}{\partial \gamma_j}= \frac{\partial \ln \map Z}{\partial \gamma_j}. \nonumber
\end{equation}

Quantum relative entropy of an asymptotic state $\rho(t)$ w.r.t. the given $\map T$-trajectory $\sigma(t)$ is then easily obtained as
\begin{eqnarray}
&& S(\rho(t)|\sigma(t))= \Tr \rho(t) \ln \rho(t) - \Tr \rho(t) \ln \sigma(t) = \nonumber \\
&&-\Tr \rho(t) \ln \map Z + \Tr \left[\rho(t)\left(\ln \sigma(t) - \sum_j \gamma_j PC_j(t)P \right)\right] \nonumber \\ &&- \Tr \rho(t)\ln \sigma(t) = -\ln \map Z - \sum_j \gamma_j \<C_j(t)\>_{\rho(t)}.
\label{eq_quantum_relative_entropy}
\end{eqnarray}
Similarly due to (\ref{eq_mean_values_real}), its change between two infinitesimally closed asymptotic trajectories given by the differential reads
\begin{eqnarray}
&& dS(\rho(t)|\sigma(t))= -\sum_j \left(\frac{\partial \ln \map Z}{\partial \gamma_j}\right) d\gamma_j - \sum_j \<C_j(t)\>_{\rho(t)} d\gamma_j \nonumber \\
&& - \sum_j \gamma_j d\<C_j(t)\>_{\rho(t)} = - \sum_j \gamma_j d\<C_j(t)\>_{\rho(t)}.
\label{eq_change_relative_entropy}
\end{eqnarray}
Note that both quantities (\ref{eq_quantum_relative_entropy}) and (\ref{eq_change_relative_entropy}) are time-independent. Also $S(\rho(t)|\sigma(t+s))$ is time-independent, since $\omega(t)=\sigma(t+s)$ is a $\map T$-trajectory too. We can conclude that in the asymptotic evolution the quantum relative entropy of two evolving states stays constant or is infinite.

Finally, let us explore the quantum relative entropy under the interchange between $\mathcal{T}$-trajectories $\sigma_1(t)$ and $\sigma_2(t)$. Applying (\ref{eq_quantum_relative_entropy}) to (\ref{eq_exchange_of_trajectories}) gives us gradually 
\begin{eqnarray}
&& S(\rho(t)|\sigma_2(t))= -\ln \map Z^{'} - \sum_j \beta_j \<C_j(t)\>_{\rho(t)}= \nonumber \\
&& -\ln \map Z - \ln \map Z_1 -\sum_j (\gamma_j+\omega_j) \<C_j(t)\>_{\rho(t)} = \nonumber \\
&& S(\rho(t)|\sigma_1(t)) + S(\sigma_1(t)|\sigma_2(t)) + \sum_j \omega_j\<C_j(t)\>_{\sigma_1(t)-\rho(t)} \nonumber
\end{eqnarray}
with $\<C_j(t)\>_{\sigma_1(t)-\rho(t)}= \Tr\{C_j(t)(\sigma_1(t)-\rho(t))\}$.
Analogously for the same interchange, the differential of quantum relative entropy reads
\begin{eqnarray}
&& dS(\rho(t)|\sigma_1(t)) + dS(\sigma_1(t)|\sigma_2(t)) \nonumber \\
&& =-\sum_j \gamma_id\<C_j(t)\>_{\rho(t)} - \sum_j \omega_id\<C_j(t)\>_{\sigma_1(t)} \nonumber \\
&& = dS(\rho(t)|\sigma_2(t)) + \sum_j \omega_jd\<C_j(t)\>_{\rho(t)-\sigma_1(t)} . \nonumber
\end{eqnarray}
This provides a useful relation for thermodynamic finite and infinitesimal changes performed in the asymptotic regime.

\section{Conclusion}
\label{sec:conclusion}
In summary, we have studied asymptotics of quantum systems undergoing a quantum Markov (finite-dimensional), discrete or continuous, process (QMP) for different scenarios of available apriori knowledge. We have proven that the asymptotics of QMPs always takes the exponential Gibbs and von Neumann like form and hence can be equivalently formulated as a constrained generalized Jaynes' problem. In order to obtain the correct dynamically induced asymptotic state we must minimize the quantum relative entropy with respect to a properly chosen maximal invariant state or maximal asymptotic trajectory under given expectation values of some constants of motion (including integrals of motion problem. In other words, we show which functional has to be chosen as the proper one when formulating the problem as an extremal one. The results are formulated in three principles, each treating a different situation. The first concerns a known initial state and consequently expectation values of all constants of motion of the given QMP are known. The second deals with a system being in the asymptotic regime for which only some of the expectation values of its constants of motion are known. Finally,  the third case treats the situation where we know, in addition, that the system is in a stationary state and only some  expectation values of some integrals of motion are known. The obtained Jaynes' principles for QMPs coincide with the standard MaxEnt principle if and only if the Markov evolution preserves the maximally mixed state. From this perspective, we have proven the applicability of the MaxEnt principle to equilibrium states of all unital QMPs and generalized it to Jaynes' principle for all asymptotic trajectories of any quantum Markov process. Thus, for QMPs we are able to bridge the gap between the Jaynes' principle as a purely formal information approach to open system dynamics asymptotics and the actual dynamics of the physical process. In particular, Theorem \ref{theorem_jaynes} reveals that asymptotic dynamics of QMPs can be equivalently captured by extremal principle. In addition, Theorem \ref{theorem_incomplete_Jaynes} and \ref{theorem_stationary_incomplete} tackle the issue of incorporating partial knowledge about the asymptotic state in accord with the dynamics of the given QMP.

We found all ingredients needed for the formulation of the extremal principle for all QMPs - the relevant constants of motion, the T-trajectories as well as the quantum relative entropy functional. It results in the equivalence between solving the actual dynamics and the exponential form of the asymptotic (stationary) state solving the asymptotic dynamics. Our approach allows to identify the set of privileged observables needed to define the asymptotic state. The presented method and results are in its spirit very similar to the one applied by Boltzmann to prove the mechanical origin of entropy increase \cite{Boltzmann,Huang,Goldstein,Lenarcic}. The Boltzmann probability distribution and its evolution equation -- the Boltzmann transport equation -- are reincarnated in our formulation to the Markov equation and the collision invariants correspond to the objects we use to define the asymptotic space. As mentioned, the proper variables to be used, are dictated by the system dynamics and can be in many cases found analytically. This is certainly a benefit in particular when methods how to identify the basis of asymptotic dynamics can be given \cite{Kollar2012}. It would be of interest to apply our method for situations of several (interacting) subsystems. In particular to study, what are the essential features of such systems undergoing QMPs, which determine and control the changes of Lagrange multipliers. Much more elusive is the determination of the time scale sufficient to reach the asymptotic regime. The rate in approaching the stationary (equilibrium) regime is strongly dependent on the concrete form of the dynamics and the initial state we start with. Some general constraints for relaxation rates may be found in \cite{Chruscinskiprl,Chruscinskilinalg}, a bound for relaxation rates in large CNOT networks is discussed in \cite{Novotny2021}. This question is particularly interesting for experimentalists, as it defines the natural scale on which the dynamics of the system takes place and when transient effects become negligible.

{\it Acknowledgements:} J. N., J.M. and I. J. have been supported by the
Grant Agency of the Czech Technical University in Prague, grant No. SGS22/181/OHK4/3T/14. 
This publication was funded by the project ”Centre for Advanced Applied Sciences”,
Registry No. CZ.02.1.01/0.0/0.0/16 019/0000778, supported by the Operational Programme Research, Development and Education, co-financed by the European Structural and Investment Funds and the state budget of the Czech Republic. The support for J. N. and I. J. by GAČR of the Czech Republic under Grant No. 23-07169S is gratefully acknowledged.

{\it Data Availability Statement:} No Data associated in the manuscript

\end{document}